\documentclass[conference]{IEEEtran}
\IEEEoverridecommandlockouts

\usepackage{cite}
\usepackage{amsmath,amssymb,amsfonts}
\usepackage{amsthm}
\newtheorem{lemma}{Lemma}
\usepackage{algorithmicx}
\usepackage{algpseudocode}
\usepackage{graphicx}
\usepackage{textcomp}
\usepackage{xcolor}
\usepackage{float}
\usepackage{dblfloatfix}
\usepackage[ruled,lined,linesnumbered]{algorithm2e}
\usepackage{enumitem}
\usepackage{soul}

\def\BibTeX{{\rm B\kern-.05em{\sc i\kern-.025em b}\kern-.08em
    T\kern-.1667em\lower.7ex\hbox{E}\kern-.125emX}}

\begin{document}

\title{Precision Coexistence: Angular-Spectral Region Constrained Waveform Design for MIMO-ISAC}

\author{
\IEEEauthorblockN{
Syed Ali Haider Shirazi$^{*}$, Taejoon Kim$^{*}$, David J. Love$^{\dagger}$, and Sergiy A. Vorobyov$^{\ddagger}$
}
\IEEEauthorblockA{
$^{*}$School of Electrical, Computer and Energy Engineering, Arizona State University, USA\\
$^{\dagger}$Elmore Family School of Electrical and Computer Engineering, Purdue University, USA\\
$^{\ddagger}$Department of Information and Communications Engineering, Aalto University, Finland\\
Email: \{sshiraz6, taejoonkim\}@asu.edu, djlove@purdue.edu, sergiy.vorobyov@aalto.fi
}
\thanks{This work is in part supported by the National Science Foundation (NSF) under grants CNS2451268, CNS2514415, CNS222557, and ITE2515378, and the Office of Naval Research (ONR) under Grant N000142112472.}%
}

\maketitle

\begin{abstract}
In this paper, we develop a dual-function waveform design framework for multiple-input multiple-output integrated sensing and communication (MIMO-ISAC) systems operating in congested electromagnetic environments. In such environments, the dual-function waveform must coexist with legacy systems, such as localized positioning sensors, maritime radar networks, and satellite communication links, that occupy specific frequency bands and angular sectors. To enable precision coexistence, we incorporate localized angular-spectral region constraints that suppress beamformed energy only within the angle-frequency regions occupied by each legacy system, while preserving the spatial degrees of freedom (DoFs) available outside those regions. The resulting waveform design problem is formulated as a nonconvex optimization problem and solved using a low-complexity majorization--minimization alternating direction method of multipliers (MM--ADMM) algorithm. Simulation results demonstrate that the proposed framework provides precise legacy-system protection while maintaining robust radar sensing and multi-user communication performance.
\end{abstract}
\begin{IEEEkeywords}
Integrated sensing and communication (ISAC), 
waveform design, precision coexistence.
\end{IEEEkeywords}

{\renewcommand{\thefootnote}{}
\footnotetext{Accepted for presentation at IEEE Conference on Integrated Sensing and Communications 2026.}
}

\section{Introduction}
Integrated sensing and communication (ISAC) has emerged as a key
enabling technology for next-generation wireless systems, where a single
multiple-input multiple-output (MIMO) platform simultaneously performs radar
sensing and multi-user communication~\cite{ISAC_1}. A fundamental challenge
in practical ISAC deployment arises in congested electromagnetic environments,
where the dual-function waveform must coexist with legacy systems, such as
localized positioning sensors and radar altimeter receivers, occupying specific
frequency bands and angular sectors~\cite{aubry1,LEGACY1}. At the same time,
the waveform must maintain adequate radar signal-to-clutter-plus-noise ratio
(SCNR) and communication quality-of-service (QoS) for all served users.
Since legacy systems are localized in both frequency and space, effective
interference mitigation requires joint angular-spectral control over the
transmitted energy.

\begin{figure}[t]
  \centering
  \includegraphics[width=1.0\linewidth,height=0.14\textheight]{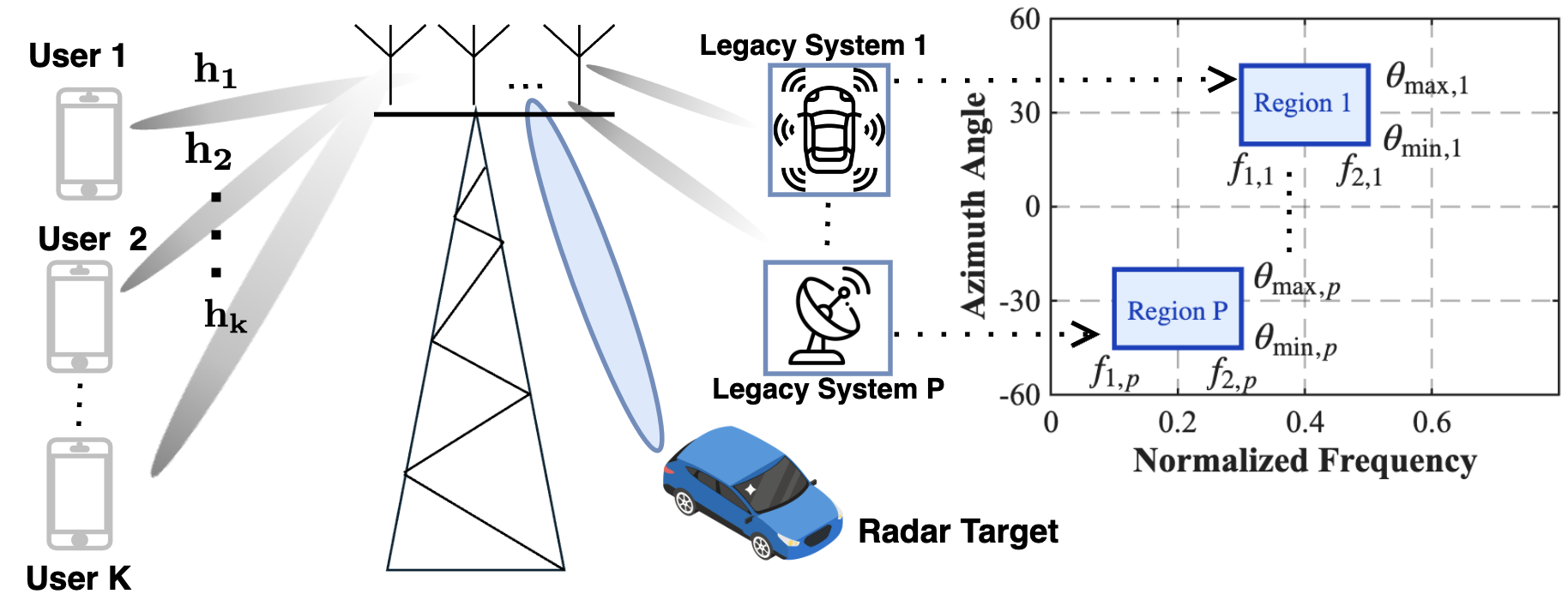}
  \caption{MIMO-ISAC system serving multiple downlink users, sensing a radar
  target, and protecting legacy systems.}
  \label{fig:spatial_spectral_regions}
\end{figure}

Conventional radar waveform design approaches typically enforce spectral
coexistence via frequency-domain notching constraints that suppress energy across
entire protected bands regardless of the angular direction~\cite{ADMMWave,aubry1}.
While effective for spectrum protection, such methods
underutilize the spatial degrees of freedom (DoFs) available in MIMO
architectures. With the advent of MIMO radar, joint angular-spectral waveform
design has been explored to shape the sensing only beam energy over selected angle-frequency
regions~\cite{mimo_radar}. Whereas from the communication perspective, spatial nulling for legacy users protection  has been studied through precoder design methods~\cite{LEGACY1,sameer2}, which are blind to the spectral variations of the environment.

More recently, angular--spectral coexistence has been considered in joint
radar--communication systems, where dedicated radar and communication
transmitters are jointly optimized~\cite{jointRadar_comm}. However, these
coexistence architectures rely on separate hardware and waveforms for
sensing and communication, increasing implementation complexity and
hardware cost while potentially introducing mutual interference. This motivates a unified ISAC architecture in which a single
waveform generated by a common base station simultaneously supports sensing,
communication, and legacy-system coexistence. In particular, a MIMO-Orthogonal frequency division multiplexing (MIMO-OFDM) waveform provides a practical basis for such integration due to its compatibility with existing 5G wireless architectures and standards.

Motivated by this need, this paper develops a dual-function MIMO-OFDM ISAC waveform design framework with joint angular-spectral constraints that selectively suppress beamformed energy only within the angle-frequency regions occupied by each
legacy system, thereby preserving valuable MIMO spatial DoFs outside those
regions. Within this framework, SCNR is adopted as the primary sensing
objective to promote robust target detection~\cite{ISAC_2,1iu2}. However,
the coupling between the transmit waveform and the waveform-dependent clutter
covariance, together with the communication QoS and coexistence constraints,
renders the resulting optimization problem highly nonconvex. To address this problem, we develop a low-complexity algorithm that integrates
the majorization--minimization (MM) framework~\cite{MM,lee_mm,lee_mm2} with
the alternating direction method of multipliers
(ADMM)~\cite{boydADMM,lee2025_clutter}. Simulation
results demonstrate precise localized protection across multiple
angular-spectral regions with negligible degradation in radar SCNR and
communication QoS relative to the unconstrained benchmark~\cite{ISAC_2}.

\textit{Notation:} $(\cdot)^T$, $(\cdot)^H$, $(\cdot)^*$, $\|\cdot\|_2$, and $\Re\{\cdot\}$ and $\Im\{\cdot\}$ denote, respectively, the
transpose, conjugate transpose,  complex conjugate, the Euclidean ($\ell_2$) norm, and the real and imaginary parts of a complex scalar or vector. 
Moreover, $\mathbf{I}_n$, $\mathbf{0}_n$, $\otimes$, $\operatorname{rank}\{\cdot\}$, and $\operatorname{blkdiag}\{\cdot\}$ and  $\operatorname{diag}(\cdot)$ denote, respectively, the $n\times n$ identity matrix, $n\times 1$ zero vector, the Kronecker product, the rank of a matrix, and block-diagonal matrix and diagonal matrix formed from their arguments. 
For a vector $\mathbf{u} \in \mathbb{C}^{M \times 1}$, 
$\{\mathbf{u}\}_{[i,j]}$ denotes the sub-vector comprising 
entries $i$ through $j$. For a matrix $\mathbf{U} \in 
\mathbb{C}^{M \times M}$, $\{\mathbf{U}\}_{[i,j],\,[p,q]}$ 
denotes the submatrix formed by rows $i$ through $j$ and 
columns $p$ through $q$.

\section{System Model}
We consider a monostatic multiuser OFDM MIMO-ISAC system depicted 
in Fig.~\ref{fig:spatial_spectral_regions}, where the base station~(BS) 
is equipped with collocated $N_T$ transmit and $N_R$ receive antennas. 
The system operates over $N$ orthogonal subcarriers with subcarrier 
spacing $\delta_f$, indexed by $n \in \{0,\ldots,N{-}1\}$, and transmits 
$L$ consecutive OFDM symbols indexed by 
$\ell \in \{0,\ldots,L{-}1\}$, during one coherent processing 
interval~(CPI). The OFDM symbol duration, $T_{\mathrm{sym}}$, includes both the useful
symbol duration $1/\delta_f$ and the cyclic prefix duration
$T_{\mathrm{CP}}$.
The spatial transmit vector
on subcarrier~$n$ during symbol~$\ell$ is
$\boldsymbol{\omega}_{n,\ell}\in\mathbb{C}^{N_T\times 1}$. After cyclic prefix (CP) removal and $N$-point discrete Fourier
transform~(DFT) processing at the radar receiver, the complex
baseband signal received across the $N_R$ receive antennas on
subcarrier~$n$ during symbol~$\ell$ is
\begin{equation}
  \label{eq:rx_per_sc}
  \boldsymbol{y}_{n,\ell}
  = \mathbf{H}^{\mathrm{t}}_{n,\ell}\,\boldsymbol{\omega}_{n,\ell}
  + \mathbf{H}^{\mathrm{c}}_{n,\ell}\,\boldsymbol{\omega}_{n,\ell}
  + \mathbf{z}_{n,\ell},
\end{equation}
where $\boldsymbol{y}_{n,\ell}\in\mathbb{C}^{N_R\times 1}$, while 
$\mathbf{H}^{\mathrm{t}}_{n,\ell}\in\mathbb{C}^{N_R\times N_T}$
and $\mathbf{H}^{\mathrm{c}}_{n,\ell}\in\mathbb{C}^{N_R\times N_T}$
are the target and clutter channel matrices, respectively. 
The vector $\mathbf{z}_{n,\ell}\in\mathbb{C}^{N_R\times 1}$ is independent 
and identically distributed~(i.i.d.) complex circular zero-mean Gaussian noise vector with variance
$\sigma_r^2$, i.e., $\mathbf{z}_{n,\ell}\sim\mathcal{CN}(\mathbf{0}_{N_R},
\sigma_r^2\mathbf{I}_{N_R})$.  

The target and clutter channels in \eqref{eq:rx_per_sc} share an identical
structure due to superposed environmental returns: 
\begin{align}
  \mathbf{H}^{\mathrm{t}}_{n,\ell}
  &=\kappa_0\,
     e^{\,j2\pi(f_{D,0}\ell T_{\mathrm{sym}} - n\delta_f\tau_0)}\,
    \boldsymbol{\psi}_{\mathrm{r}}(\theta_0)\boldsymbol{\psi}_{\mathrm{t}}^H(\theta_0),
  \label{eq:Ht}\\[3pt]
  \mathbf{H}^{\mathrm{c}}_{n,\ell}
  &= \sum_{c=1}^{C}
     \kappa_c\,
     e^{\,j2\pi(f_{D,c}\ell T_{\mathrm{sym}} - n\delta_f\tau_c)}\,
    \boldsymbol{\psi}_{\mathrm{r}}(\theta_c)\boldsymbol{\psi}_{\mathrm{t}}^H(\theta_c),
  \label{eq:Hc}
\end{align}
where $\kappa_0\in\mathbb{C}$ captures the combined effects of radar 
cross section~(RCS) and path loss. The target is characterized by its azimuth angle 
$\theta_0$,  
Doppler frequency $f_{D,0}$, induced by its radial 
velocity $v_0$ and round-trip delay $\tau_0 = 2R_0/\zeta$, where $R_0$ is the target range 
and $\zeta$ is the speed of light. 
The clutter scatterer $c$ is parameterized identically 
by $\kappa_c$, $f_{D,c}$, $\tau_c$, and $\theta_c$. For a 
half-wavelength-spaced uniform linear array~(ULA), the transmit and 
receive steering vectors are $\boldsymbol{\psi}_{\mathrm{t}}(\theta) \!\!=\!\! [1, 
e^{-j\pi\sin\theta}, \ldots, e^{-j(N_T-1)\pi\sin\theta}]^T 
\!\!\!\in\!\mathbb{C}^{N_T\times 1}$ and $\boldsymbol{\psi}_{\mathrm{r}}(\theta) \!\!=\! [1, 
e^{-j\pi\sin\theta}, \ldots, e^{-j(N_R-1)\pi\sin\theta}]^T 
\!\!\in\! \mathbb{C}^{N_R\times 1}$, respectively.


To enable coherent processing across the $L$ OFDM symbols, the per-symbol received vectors on subcarrier $n$ in \eqref{eq:rx_per_sc} are organized into a unified space-time block vector $\boldsymbol{y}_n =
[\boldsymbol{y}_{n,0}^T,\ldots,\boldsymbol{y}_{n,L{-}1}]^T]^T
\in\mathbb{C}^{N_RL\times 1}$. 
The Doppler-induced phase progression
across $L$ symbols is
captured by $\mathbf{d}(f_D) = [1, e^{j2\pi f_D T_{\mathrm{sym}}},
\ldots, e^{j2\pi f_D(L-1)T_{\mathrm{sym}}}]^T
\in\mathbb{C}^{L\times 1}$, and spatial-temporal steering vector $\mathbf{v}(\theta,f_D)
  = \mathbf{d}(f_D)\otimes\mathbf{b}(\theta)\otimes\boldsymbol{\psi}_{\mathrm{t}}(\theta)
    \in\mathbb{C}^{N_TN_RL\times 1}$.
Then, the stacked received signal on subcarrier $n$ admits the compact form
\begin{align}
  \label{eq:yn_compact}
  \boldsymbol{y}_n
  &=\kappa_0\,e^{-j2\pi n\delta_f\tau_0}\,
    \boldsymbol{W}_{n}\mathbf{v}(\theta_0,f_{D,0})
  + \mathbf{z}_n \notag\\
  &\quad \hspace{1cm}+ \sum_{c=1}^{C}
    \kappa_c\,e^{-j2\pi n\delta_f\tau_c}\,
    \boldsymbol{W}_{n}\mathbf{v}(\theta_c,f_{D,c}),
\end{align}
where the first term on the right hand side (r.h.s.) of \eqref{eq:yn_compact} represents the target return, where
$\boldsymbol{W}_{n} \!=\! \operatorname{blkdiag}\{
\mathbf{I}_{N_R}\otimes\boldsymbol{W}_{n,0}^T, \ldots,
\mathbf{I}_{N_R}\otimes\boldsymbol{W}_{n,L{-}1}^T
\} \!\in\! \mathbb{C}^{N_RL\times N_TN_RL}$ is the transmit matrix
on subcarrier $n$, $e^{-j2\pi n\delta_f\tau_0}$ encodes the range-dependent 
phase rotation across subcarriers, and $\boldsymbol{W}_{n}\mathbf{v}(\theta_0,
f_{D,0})$ is the space-time response of the transmitted waveform
to the target at angle $\theta_0$ and Doppler $f_{D,0}$. 
The last term on the r.h.s. of 
\eqref{eq:yn_compact} accumulates the returns from $C$ clutter scatterers,
each modulating the transmitted waveform through its own
space-time vector $\mathbf{v}(\theta_c,f_{D,c})$ and
subcarrier phase rotation $e^{-j2\pi n\delta_f\tau_c}$. 
The vector
$\mathbf{z}_n = [\mathbf{z}_{n,0}^T,\ldots,\mathbf{z}_{n,L{-}1}^T]^T
\in\mathbb{C}^{N_RL\times 1}$ is the stacked noise.

As depicted in Fig. \ref{fig:spatial_spectral_regions}, the BS also serves $K$ single-antenna communication users over the identical $N$ subcarriers and $L$ OFDM symbols allocated for radar sensing. The signal received at user $k$ on subcarrier $n$ of symbol $\ell$
is
\begin{equation}
  \label{eq:rx_user}
  y_{n,\ell,k}
  = \mathbf{h}_{n,k}^{H}\,\boldsymbol{\omega}_{n,\ell}
  + \nu_{n,\ell,k},
\end{equation}
where $\mathbf{h}_{n,k}\in\mathbb{C}^{N_T\times 1}$ is the
downlink channel vector from the BS to user $k$ on subcarrier $n$, assumed to be known at the transmitter, and $\nu_{n,\ell,k}\sim\mathcal{CN}(0,\sigma_c^2)$ is the  additive noise at each communication user. 
The intended symbol $s_{n,\ell,k}$ for user $k$ on subcarrier $n$ of symbol $\ell$ is drawn
independently from an $\Omega$-phase shift keying (PSK) constellation
$\mathcal{S} = \{ e^{j\pi(2i-1)/\Omega}, i = 1,\ldots,\Omega \}$,
where $\Phi = \pi/\Omega$ is half the angular width of each
decision region.

Symbol-level precoding~(SLP)~\cite{ISAC_2} is well-suited here 
because it directly optimizes instantaneous multi-carrier waveforms rather than static linear precoding.
By exploiting constructive interference~(CI), SLP ensures the rotated received signal $\mathbf{h}_{n,k}^H\boldsymbol{\omega}_{n,\ell}s_{n,\ell,k}^*$ lies strictly within the valid PSK decision boundaries with a predetermined safety margin $\gamma_k$ of each user $k$.
The CI margin for user $k$ on subcarrier $n$ at symbol $\ell$ is defined as \cite{ISAC_2}
\begin{align}
  \label{eq:ci_margin}
  \delta_{n,\ell,k}
  &= \Re\!\left\{\mathbf{h}_{n,k}^H\boldsymbol{\omega}_{n,\ell}
    s_{n,\ell,k}^*\right\}\sin(\pi/\Omega) \notag\\
  &\quad \hspace{1cm} - \left|\Im\!\left\{\mathbf{h}_{n,k}^H\boldsymbol{\omega}_{n,\ell}
    s_{n,\ell,k}^*\right\}\right|\cos(\pi/\Omega).
\end{align}
The QoS requirement $\delta_{n,\ell,k}\!\geq \!\gamma_{k}$
ensures the received signal falls within the correct PSK decision
region across all subcarriers $n$, symbols $\ell$, and users $k$. 
The margin is defined as
$\gamma_k = \sqrt{\sigma_c^2\,\Gamma}\sin(\pi/\Omega)$,
where $\Gamma \geq 0$ is the
communication QoS threshold and $\sigma_c^2$ the receiver
noise variance.

\section{Angular--Spectral Region Constraints}
Coexistence between the ISAC system and incumbent legacy systems requires limiting ISAC radiated energy within the joint angle--frequency regions of $P$ legacy systems, where each legacy system $p \in \{1,\ldots,P\}$ is defined by an angular sector $[\theta_{\min,p},\theta_{\max,p}]$ and frequency interval $[f_{1,p},f_{2,p}]$, as illustrated in Fig.~\ref{fig:spatial_spectral_regions}.
We constrain the aggregate energy radiated into these continuous protected regions. 

The angular component of the coexistence constraint for legacy system $p$ is captured by the positive semidefinite spatial correlation matrix $\mathbf{B}_p \in \mathbb{C}^{N_T\times N_T}$, defined as $\mathbf{B}_p = \int_{\theta_{\min,p}}^{\theta_{\max,p}} \boldsymbol{\psi}_{\mathrm{t}}(\theta)\boldsymbol{\psi}_{\mathrm{t}}^H(\theta)\,d\theta$, where $\boldsymbol{\psi}_{\mathrm{t}}(\theta)$ is the transmit steering vector defined in Section~II. The matrix $\mathbf{B}_p$ captures the angular exposure across the protected sector.

Similarly, the spectral component is
captured by $\mathbf{S}_{I,p}\in\mathbb{C}^{N\times N}$, which integrates the transmitted spectral energy over the continuous band $[f_{1,p}, f_{2,p}]$ and its  
$m$th row and $n$th column entry is given by 
$[\mathbf{S}_{I,p}]_{m,n} = \int_{f_{1,p}}^{f_{2,p}}
e^{j2\pi f(m-n)}\,df$, for $m,n\in\{0,1,\ldots,N{-}1\}$. The spectral energy is evaluated over the useful OFDM symbol duration $1/\delta_f$. Since $\boldsymbol{\omega}_{n,\ell}$ is in the frequency domain,  we account for the unitary IFFT mapping through $\mathbf{R}_{I,p}
  = \mathbf{F}_N\mathbf{S}_{I,p}\mathbf{F}_N^H$. Here $\mathbf{F}_N\in\mathbb{C}^{N\times N}$ is the unitary DFT matrix with entries $[\mathbf{F}_N]_{m,n}=N^{-1/2}e^{-j2\pi mn/N}$. Thus, $\mathbf{F}_N^H$ maps subcarrier coefficients to time-domain samples.
Unlike conventional subcarrier-selective masking techniques~\cite{LM}, ${\mathbf{R}}_{I,p}\succeq\mathbf{0}$ integrates over the continuous band $[f_{1,p},f_{2,p}]$ to capture both in-band energy and the spectral sidelobe leakage caused by the finite temporal duration of the OFDM waveform.

We formulate a joint angular--spectral energy matrix, unifying the spatial and spectral domains into a single matrix via the Kronecker product $\mathbf{Q}_p
  = \mathbf{R}_{I,p}\otimes\mathbf{B}_p
  \in\mathbb{C}^{N_TN\times N_TN},\mathbf{Q}_p\succeq\mathbf{0}.$
By stacking the per-subcarrier transmit vectors across all $N$ subcarriers into a space--frequency vector $\boldsymbol{\omega}_{\ell}\triangleq
[\boldsymbol{\omega}_{0,\ell}^T,\ldots,\boldsymbol{\omega}_{N-1,\ell}^T]^T
\in\mathbb{C}^{N_TN\times 1}$ for symbol $\ell$, the quadratic form $\boldsymbol{\omega}_{\ell}^H\mathbf{Q}_p\boldsymbol{\omega}_{\ell}$ yields the total beamformed energy of the waveform within the protected angle--frequency region $[\theta_{\min,p},\theta_{\max,p}]\times[f_{1,p},f_{2,p}]$ of legacy system $p$.
Consequently, multi-domain coexistence is enforced by limiting this leakage:
\begin{equation}
  \label{eq:coexistence}
  \boldsymbol{\omega}_{\ell}^H\mathbf{Q}_p
  \boldsymbol{\omega}_{\ell}
  \leq \eta_p,
  \qquad \forall \ell,\ \forall p,
\end{equation}
where $\eta_p>0$ is the maximum allowable interference-energy threshold for system $p$. This unified formulation generalizes across diverse array geometries and protected frequency intervals.

\section{Problem Formulation}
To process the radar backscatter signal, a linear filter $\mathbf{g}_n \in \mathbb{C}^{N_RL\times 1}$ is applied
to the stacked received vector $\boldsymbol{y}_n$
in~\eqref{eq:yn_compact}. The resulting radar output SCNR on subcarrier $n$ is
\begin{equation}
  \label{eq:sinr_n}
  \mathrm{SCNR}_n(\boldsymbol{W}_{n}, \mathbf{g}_n)
  =
  \frac{
    \sigma_0^2\,
    \left|\mathbf{g}_n^H\boldsymbol{W}_{n}\mathbf{v}(\theta_0,f_{D,0})\right|^2
  }{
    \mathbf{g}_n^H\mathbf{R}_n(\boldsymbol{W}_{n})\mathbf{g}_n
  },
\end{equation}
where $\mathbf{R}_n$ is the  clutter-plus-noise covariance matrix:  
\begin{equation}
  \label{eq:Rn}
  \!\!\mathbf{R}_n
  \!=\!\!
  \sum_{c=1}^{C}\!|\kappa_c|^2 
  \boldsymbol{W}_{n}\mathbf{v}(\theta_c,f_{D,c})
  \mathbf{v}(\theta_c,f_{D,c})^H\boldsymbol{W}_{n}^{H}
  \!+\! \sigma_r^2\mathbf{I}_{N_RL}. 
\end{equation}
For any given $\boldsymbol{W}_{n}$, the minimum variance distortionless
response~(MVDR) filter $\mathbf{g}_n$ that maximizes~\eqref{eq:sinr_n} is expressed as $\mathbf{g}_n = \mathbf{R}_n^{-1}\boldsymbol{W}_{n} \mathbf{v}(\theta_0, f_{D,0})$ yielding the maximized SCNR
\begin{equation}
  \label{eq:sinr_mvdr}
\!\!\!\!  \mathrm{SCNR}_n\!(\boldsymbol{W}_{n})
  \!=\!
  \sigma_0^2\,
  \mathbf{v}(\theta_0,f_{D,0})^H
  \boldsymbol{W}_{n}^{H}
  \mathbf{R}_n^{-1}
  \boldsymbol{W}_{n}
  \mathbf{v}(\theta_0,f_{D,0}).
\end{equation}
\begin{algorithm}[t]
\small
\caption{Proposed MM-ADMM Algorithm}
\label{alg:mm_admm}
\SetAlgoLined
\label{inputs}\KwIn{$\{\kappa_c,\theta_c,\tau_c,f_{D,c}\}$,
$\{\kappa_0,\theta_0,\tau_0,f_{D,0}\}$,
$\{\mathbf{h}_{n,k}\}$, $\{s_{n,\ell,k}\}$,
$\{\eta_p\}$, $\rho>0$, $T$, $M$}
\KwOut{$\{\mathbf{\omega}^\star_\ell\}_{\ell=0}^{L-1}$}

\textbf{Offline:} Compute $\boldsymbol{\Lambda}_n$;

\textbf{Initialize:} feasible $\{\mathbf{\omega}^{(0)}_\ell\}$\;

\For{$t=1,\ldots,T$}{\label{outerloop_t}
    \For{$n=0,\ldots,N-1$}{ 
        Compute $({\mathbf{R}}_n^{(t)})^{-1}$ via
        Sherman--Morrison recursion~\eqref{eq:SM}\; \label{Sherman_Morrison}
        Form ${\mathbf{u}}_n^{(t)}$,
        $\boldsymbol{\Psi}_n^{(t)}$,
        ${\mathbf{U}}_n^{(t)}$, and
        ${\mathbf{q}}_n^{(t)}$\; \label{MM_variables}
    }
    Stack ${\mathbf{q}}_\ell^{(t)}$ and
    ${\mathbf{U}}^{(t)}[\ell]$ over $N$,
    $\forall\,\ell$\; \label{stack_waveform}

   Set $\mathbf{\omega}^{<1>}[\ell]\leftarrow\mathbf{\omega}^{(t)}[\ell]$,
$\mathbf{z}^{<1>}[\ell]\leftarrow\mathbf{\omega}^{(t)}[\ell]$,
$\boldsymbol{\lambda}^{<1>}[\ell]\leftarrow\mathbf{0}_{N_TN}$,
$\forall\,\ell$\; \label{admm_initialization}

    \For{$m=1,\ldots,M$}{ \label{admm_m}
        \For{$\ell=0,\ldots,L-1$}{
            Update $\mathbf{\omega}_\ell^{<m+1>}$
            via~\eqref{eq:x_update}\; \label{admm_xupdate}
            Update $\mathbf{z}_\ell^{<m+1>}$
            via~\eqref{eq:y_update}\; \label{admm_yupdate}
            Update $\boldsymbol{\lambda}_\ell^{<m+1>}$
            via~\eqref{eq:dual_update}\; \label{admm_dualupdate}
        }
    }\label{admm_end}
    Set $\boldsymbol{\omega}_\ell^{(t+1)}\leftarrow
    \mathbf{\omega}_\ell^{<m+1>}$, $\forall\,\ell$\;
}
\textbf{Return} $\mathbf{\omega}^\star_\ell\leftarrow
\boldsymbol{\omega}_\ell^{(t+1)}$, $\forall\,\ell$\;
\end{algorithm}
The joint waveform design for multi-domain coexistence problem is then formulated as the following optimization problem: 
\begin{equation}
  \label{eq:problem}
  \max_{\{\boldsymbol{\omega}_{n,\ell}\} \in \mathcal{X}}\;\;
  \sum_{n=0}^{N-1}\mathrm{SCNR}_n(\boldsymbol{W}_{n}),
\end{equation}
where the feasible set $\mathcal{X}$ is defined as 
\begin{equation}
\!\!\!\!
  \label{eq:feasible}
  \mathcal{X}
  \!\!=\!
  \left\{ \!\!
    \{\boldsymbol{\omega}_{n,\ell}\}
    \; \!\!\middle|\; \!\!\!\!
    \begin{array}{l}
      \delta_{n,\ell,k}\geq\gamma_{k},\;
      (\mathbf{\omega}_\ell)^H\mathbf{Q}_p\mathbf{\omega}_\ell
      \leq \eta_p,\\[3pt]
      |{\omega}_{r,\ell,q}| \leq A_{\mathrm{papr}},\;
\!      \sum_{\ell=0}^{L-1}\|\mathbf{\omega}_\ell\|_2^2 \!=\! P_T
    \end{array}
  \!\!\!\!\right\},
\end{equation}
with ${\omega}_{r,\ell,\tau}$ being the $\tau$-th time-domain sample  at transmit antenna element $r$ of OFDM symbol $\ell$, obtained
via an $N$-point inverse DFT of $\{\boldsymbol{\omega}_{n,\ell}\}_{n=0}^{N-1}$. 
In \eqref{eq:feasible}, the first constraint enforces the CI-QoS
requirements for all $K$ users, while the second constraint  implements our proposed joint angular-spectral coexistence condition via $\mathbf{Q}_p$ in \eqref{eq:coexistence} to bound leakage energy below the threshold $\eta_p$.  The third is the global
peak-to-average power ratio~(PAPR) constraint, which limits the
instantaneous time-domain amplitude at each antenna below
$A_{\mathrm{papr}}=\sqrt{(1+\varepsilon)P_T/(LNN_T)}$
with tolerance $\varepsilon\geq 0$, mitigating non-linear distortions in the radar RF chains.
The last constraint normalizes the total transmit power to $P_T$.
To tackle the highly non-convex problem in \eqref{eq:problem}, we develop a low-complexity MM-ADMM algorithm in the next section. 

\section{Proposed Algorithm}
The primary challenge in solving~\eqref{eq:problem} is the waveform-dependent clutter covariance $\mathbf{R}_n$ in~\eqref{eq:Rn}.
To address this, we propose a double-loop iterative algorithm (Algorithm~\ref{alg:mm_admm}) based on a MM-ADMM framework. Below, we establish preliminaries which enable its computationally efficient implementation. 

\subsection{Preliminaries}
\label{sec:preliminary}

For a given waveform $\boldsymbol{W}_{n}$, the MVDR receive filter is $\mathbf{u}_n \triangleq
\mathbf{R}_n^{-1}\boldsymbol{W}_{n}
\mathbf{v}(\theta_0,f_{D,0}) \in \mathbb{C}^{N_RL\times 1}$, and the invariant spatial clutter kernel is
$\boldsymbol{\Lambda}_n \triangleq \sum_{c=1}^{C}
|\kappa_c|^2\mathbf{v}(\theta_c,f_{D,c})
\mathbf{v}(\theta_c,f_{D,c})^H
\in \mathbb{C}^{N_TN_RL\times N_TN_RL}.$ 
We define $\boldsymbol{\omega}_{n} \triangleq
[\boldsymbol{\omega}_{n,0}^T, \ldots, \boldsymbol{\omega}_{n,L-1}^T]^T \in
\mathbb{C}^{N_TL \times 1}$ as the per-subcarrier waveform vector stacked across all $L$ symbols. 
The following lemma reformulates the waveform dependence inside the clutter covariance matrix $\mathbf{R}_n$ in \eqref{eq:Rn} into an explicit quadratic form of $\boldsymbol{\omega}_{n}$.

\begin{lemma}
\label{lem:decouple}
Let $\boldsymbol{\Psi}_n \triangleq \operatorname{blkdiag}
\{
\mathbf{u}_{n,0} \otimes \mathbf{I}_{N_T},
\ldots,
\mathbf{u}_{n,L-1} \otimes \mathbf{I}_{N_T}
\}
\in \mathbb{C}^{N_TN_RL \times N_TL}$, where
$\mathbf{u}_{n,\ell} \triangleq {\mathbf{u}_n}_{[\ell N_R + 1,(\ell+1)N_R]} \in \mathbb{C}^{N_R \times 1}$ is the $\ell$-th block of $N_R$ consecutive entries of $\mathbf{u}_n
=
\mathbf{R}_n^{-1}
\boldsymbol{W}_{n}
\mathbf{v}(\theta_0,f_{D,0})
\in \mathbb{C}^{N_RL \times 1}$.
The clutter quadratic form in~\eqref{eq:Rn} can be expressed as an explicit function of $\boldsymbol{\omega}_{n}$, 
\begin{equation}
\label{eq:decouple}
(\mathbf{u}_n)^H
\boldsymbol{W}_{n}
\boldsymbol{\Lambda}_n
\boldsymbol{W}_{n}^H
\mathbf{u}_n
=
\boldsymbol{\omega}_{n}^H
{\mathbf{U}}_n
\boldsymbol{\omega}_{n},
\end{equation}
where ${\mathbf{U}}_n \triangleq (\overline{\mathbf{U}}_n)^T \succeq \mathbf{0}$
and $\overline{\mathbf{U}}_n \triangleq 
\boldsymbol{\Psi}_n^H \boldsymbol{\Lambda}_n \boldsymbol{\Psi}_n
\in \mathbb{C}^{N_TL \times N_TL}$.
\end{lemma}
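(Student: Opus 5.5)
The plan is to prove \eqref{eq:decouple} by rewriting the vector $\boldsymbol{W}_{n}^H\mathbf{u}_n$ as a linear map applied to the (conjugated) waveform. Throughout, $\mathbf{u}_n$ is treated as a fixed vector: it is the MVDR filter evaluated at the current waveform, exactly as in the MM step. The lemma is then a pure algebraic identity with no dependence of $\mathbf{u}_n$ on $\boldsymbol{\omega}_n$ to track. I read $\boldsymbol{W}_{n,\ell}$ in the definition of $\boldsymbol{W}_{n}$ as the transmit vector $\boldsymbol{\omega}_{n,\ell}\in\mathbb{C}^{N_T\times 1}$. With this reading, each diagonal block $\mathbf{I}_{N_R}\otimes\boldsymbol{\omega}_{n,\ell}^T$ has size $N_R\times N_RN_T$, which matches the stated dimensions of $\boldsymbol{W}_{n}$.

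\textbf{Step 1 (blockwise adjoint).} Because $\boldsymbol{W}_{n}$ is block diagonal, $\boldsymbol{W}_{n}^H\mathbf{u}_n$ is the stack over $\ell$ of $(\mathbf{I}_{N_R}\otimes\boldsymbol{\omega}_{n,\ell}^*)\,\mathbf{u}_{n,\ell}$. Here $\mathbf{u}_{n,\ell}$ is the $\ell$-th length-$N_R$ block of $\mathbf{u}_n$, as in the statement.

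\textbf{Step 2 (swap the roles of filter and waveform).} I will use the mixed-product rule twice. First,
\begin{equation*}
(\mathbf{I}_{N_R}\otimes\boldsymbol{\omega}_{n,\ell}^*)(\mathbf{u}_{n,\ell}\otimes 1)=\mathbf{u}_{n,\ell}\otimes\boldsymbol{\omega}_{n,\ell}^*.
\end{equation*}
Second,
\begin{equation*}
\mathbf{u}_{n,\ell}\otimes\boldsymbol{\omega}_{n,\ell}^*=(\mathbf{u}_{n,\ell}\otimes\mathbf{I}_{N_T})(1\otimes\boldsymbol{\omega}_{n,\ell}^*).
\end{equation*}
Stacking over $\ell$ then gives $\boldsymbol{W}_{n}^H\mathbf{u}_n=\boldsymbol{\Psi}_n\boldsymbol{\omega}_n^*$, where $\boldsymbol{\Psi}_n$ is the block-diagonal matrix defined in the lemma.

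I expect this step to be the main obstacle. The issue is bookkeeping, not depth: the Kronecker ordering must be consistent. The transmit index must run fastest, nested inside the receive index, nested inside the symbol index, matching $\mathbf{v}=\mathbf{d}\otimes\mathbf{b}\otimes\boldsymbol{\psi}_{\mathrm{t}}$. Only then do the rows of $\boldsymbol{\Psi}_n\boldsymbol{\omega}_n^*$ align with the coordinates on which $\boldsymbol{\Lambda}_n$ acts. I will check this by writing out the index $(\ell N_R+r)N_T+q$ of a generic entry on both sides.

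\textbf{Step 3 (quadratic form and transpose).} Substituting Step 2 into the left-hand side of \eqref{eq:decouple} gives
\begin{equation*}
(\mathbf{u}_n)^H\boldsymbol{W}_{n}\boldsymbol{\Lambda}_n\boldsymbol{W}_{n}^H\mathbf{u}_n=\boldsymbol{\omega}_n^T\,\overline{\mathbf{U}}_n\,\boldsymbol{\omega}_n^*,
\qquad \overline{\mathbf{U}}_n=\boldsymbol{\Psi}_n^H\boldsymbol{\Lambda}_n\boldsymbol{\Psi}_n .
\end{equation*}
The right-hand side is a scalar, so it equals its own transpose, $\boldsymbol{\omega}_n^H\overline{\mathbf{U}}_n^T\boldsymbol{\omega}_n=\boldsymbol{\omega}_n^H\mathbf{U}_n\boldsymbol{\omega}_n$.

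\textbf{Step 4 (positive semidefiniteness).} $\boldsymbol{\Lambda}_n$ is a nonnegative combination of rank-one outer products $\mathbf{v}\mathbf{v}^H$, so it is Hermitian PSD. Hence $\overline{\mathbf{U}}_n$ is Hermitian PSD. Finally, $\mathbf{U}_n=\overline{\mathbf{U}}_n^T=\overline{\mathbf{U}}_n^*$, and the entrywise conjugate of a Hermitian PSD matrix is again Hermitian PSD, because $\mathbf{x}^H\overline{\mathbf{U}}_n^*\mathbf{x}=(\mathbf{x}^{*H}\overline{\mathbf{U}}_n\mathbf{x}^*)^*\ge 0$. This gives $\mathbf{U}_n\succeq\mathbf{0}$ and completes the plan.
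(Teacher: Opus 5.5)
Your proposal is correct and follows essentially the same route as the paper. Both arguments use the blockwise Kronecker identity $(\mathbf{I}_{N_R}\otimes\boldsymbol{\omega}_{n,\ell}^T)^H\mathbf{u}_{n,\ell}=(\mathbf{u}_{n,\ell}\otimes\mathbf{I}_{N_T})\boldsymbol{\omega}_{n,\ell}^*$ to get $\boldsymbol{W}_{n}^H\mathbf{u}_n=\boldsymbol{\Psi}_n\boldsymbol{\omega}_n^*$, then substitute and transpose the scalar $\boldsymbol{\omega}_n^T\overline{\mathbf{U}}_n\boldsymbol{\omega}_n^*$, and your Step 4 adds a justification of $\mathbf{U}_n\succeq\mathbf{0}$ that the paper only asserts (the index-ordering check you flag in Step 2 is not needed for the identity, since both sides use the same $\boldsymbol{\Lambda}_n$).
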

\begin{proof}
Using the block-diagonal structure of $\boldsymbol{W}_{n}$, the
product $\boldsymbol{W}_{n}^H\mathbf{u}_n$ can be written as
%
$\boldsymbol{W}_{n}^H\mathbf{u}_n
=
\boldsymbol{\Psi}_n\boldsymbol{\omega}_{n}^*$,
%
which follows from the inner product identity
%
$\left(\mathbf{I}_{N_R}\otimes (\boldsymbol{\omega}_{n,\ell})^T\right)^H
\mathbf{u}_{n,\ell}
=
\left(
\mathbf{u}_{n,\ell}
\otimes \mathbf{I}_{N_T}
\right)
\boldsymbol{\omega}_{n}^* [\ell]$.
%
Substituting this into the quadratic form yields 
%
$\mathbf{u}_n^H
\boldsymbol{W}_{n}
\boldsymbol{\Lambda}_n
\boldsymbol{W}_{n}^H
\mathbf{u}_n
=
(\boldsymbol{\Psi}_n\boldsymbol{\omega}_{n}^*)^H
\boldsymbol{\Lambda}_n
(\boldsymbol{\Psi}_n\boldsymbol{\omega}_{n}^*)
=\boldsymbol{\omega}_{n}^T  \boldsymbol{\Psi}_n^H \boldsymbol{\Lambda}_n 
\boldsymbol{\Psi}_n \boldsymbol{\omega}_{n}^*
= \boldsymbol{\omega}_{n}^T \overline{\mathbf{U}}_n \boldsymbol{\omega}_{n}^*
=
\boldsymbol{\omega}_{n}^H
{\mathbf{U}}_n
\boldsymbol{\omega}_{n}. 
$
This completes the proof.
\end{proof}

Lemma~\ref{lem:decouple} isolates the transmit waveform from the clutter  structure, allowing us to apply the standard MM identity~\cite{MM} to construct a tractable surrogate for the radar SCNR:
\begin{equation}
\label{eq:surrogate}
\!\!\mathrm{SCNR}_n(\boldsymbol{\omega}_{n})
\ge
2\sigma_0^2
\Re
\left\{
\mathbf{q}_n^H
\boldsymbol{\omega}_{n}
\right\}
- 
\sigma_0^2
\boldsymbol{\omega}_{n}^H
{\mathbf{U}}_n
\boldsymbol{\omega}
+
c_n,
\end{equation}
where $\mathbf{q}_n \in \mathbb{C}^{N_TL\times 1}$
is a coefficient vector satisfying
$
\mathbf{u}_n^H
\boldsymbol{W}_{n}
\mathbf{v}(\theta_0,f_{D,0})
=
\mathbf{q}_n^H
\boldsymbol{\omega}_{n},
$
and $c_n$ is a constant independent of $\boldsymbol{\omega}_{n}$.
The first term on the r.h.s.\ of~\eqref{eq:surrogate} captures
the target gain at $(\theta_0,f_{D,0})$, while the second term penalizes spatial clutter leakage through the fixed receive
filter.

%

Computation of the terms on the r.h.s. of~\eqref{eq:surrogate} requires online  inversion of the full-rank $\mathbf{R}_n \in \mathbb{C}^{N_RL \times N_RL}$ at each iteration of the iterative MM-ADMM framework, which is a major bottleneck. The approximate inverse $({\mathbf{R}}_n^{(t)})^{-1}$ is 
computed online at each iteration $t$, via the Sherman--Morrison recursion \cite{boydbook} (Algorithm~\ref{alg:mm_admm}, line~\ref{Sherman_Morrison}). 
Specifically, initializing 
$({\mathbf{R}}_{n,0}^{(t)})^{-1} = \sigma_r^{-2}\mathbf{I}_{N_RL}$,
the $l$th rank-one update for $l = 1,\ldots,C$ is
\begin{equation}
\label{eq:SM}
({\mathbf{R}}_{n,l}^{(t)})^{-1}
=
({\mathbf{R}}_{n,l-1}^{(t)})^{-1}
-
\frac{
({\mathbf{R}}_{n,l-1}^{(t)})^{-1}
\mathbf{f}_l^{(t)}
(\mathbf{f}_l^{(t)})^H
({\mathbf{R}}_{n,l-1}^{(t)})^{-1}
}{
1
+
(\mathbf{f}_l^{(t)})^H
({\mathbf{R}}_{n,l-1}^{(t)})^{-1}
\mathbf{f}_l^{(t)}
},
\end{equation}
where $\mathbf{f}_l^{(t)} \triangleq \sqrt{\lambda_{n,l}}\,
\boldsymbol{W}_{n}^{(t)}\mathbf{g}_{n,l} \in \mathbb{C}^{N_RL \times 1}$
rotates the $l$th eigenvector $\mathbf{g}_{n,l}$ of $\boldsymbol{\Lambda}_n$ through 
the current transmit waveform matrix $\boldsymbol{W}_{n}^{(t)}$ weighted by $\sqrt{\lambda_{n,l}}$. 
Upon $C$ rank-one updates, we obtain 
${\mathbf{R}}_n^{(t)} = {\mathbf{R}}_{n,N_k}^{(t)}$.
It should be noted that this low-rank methodology drops the online matrix inversion complexity from
$\mathcal{O}((N_RL)^3)$ to
$\mathcal{O}(C(N_RL)^2)$ with $C \ll N_R L$.

\subsection{Proposed MM-ADMM Solution}

We develop a double-loop framework that pairs an outer MM loop
(Algorithm~\ref{alg:mm_admm}, lines~\ref{outerloop_t}--\ref{admm_initialization}) with an inner ADMM loop
(lines~\ref{admm_m}--\ref{admm_end}). 

The outer iteration index is denoted by $t$. 
In what follows, variables updated during the outer iteration carry the superscript $(\cdot)^{(t)}.$ Since the surrogate function on the r.h.s. of~\eqref{eq:surrogate} and the constraints 
in \eqref{eq:feasible} are separable with respect to the $L$ OFDM symbols, the MM step breaks into $L$ independent per-symbol subproblems. 
We define the per-symbol vector by stacking the subcarrier contributions (line~\ref{stack_waveform}) as
${\mathbf{q}}_\ell^{(t)} \triangleq 
[({\mathbf{q}}_{0,\ell}^{(t)})^T, \ldots, 
({\mathbf{q}}_{N-1,\ell}^{(t)})^T]^T
\in \mathbb{C}^{N_TN\times 1}$,
where ${\mathbf{q}}_{n,\ell}^{(t)} \in \mathbb{C}^{N_T\times 1}$ 
is the $\ell$th block of ${\mathbf{q}}_n^{(t)}$, and the 
per-symbol matrix as
${\mathbf{U}}^{(t)}_\ell \triangleq 
\operatorname{blkdiag}\{{\mathbf{U}}_{0,\ell}^{(t)},
\ldots,{\mathbf{U}}_{N-1,\ell}^{(t)}\}
\in \mathbb{C}^{N_TN\times N_TN}$,
where ${\mathbf{U}}_{n,\ell}^{(t)} \in \mathbb{C}^{N_T \times N_T}$
is the $\ell$-th symbol block of ${\mathbf{U}}_n^{(t)}$, i.e., ${\mathbf{U}}_{n,\ell}^{(t)} \triangleq 
\{{\mathbf{U}}_n^{(t)}\}_{[\ell N_T+1,\,(\ell+1)N_T],\,
[\ell N_T+1,\,(\ell+1)N_T]}$.
At iteration $t$, the per-symbol subproblem is then given by
\begin{equation}
\label{eq:mm_qp}
\boldsymbol{\omega}_\ell^{(t+1)}=\mathop{\mathrm{argmin}}_{\mathbf{\omega}_\ell\in\mathcal{F}_\ell}
\;
(\mathbf{\omega}_\ell)^H{\mathbf{U}}^{(t)}[\ell]\,\mathbf{\omega}_\ell
-
2\Re\!\left\{
({\mathbf{q}}_\ell^{(t)})^H\mathbf{\omega}_\ell
\right\},
\end{equation}
where $\mathcal{F}_\ell \!=\! \{\mathbf{\omega}_\ell \in \mathbb{C}^{N_TN \times 1} \mid 
\delta_{n,\ell,k} \geq \gamma_{k},
(\mathbf{\omega}_\ell)^H \mathbf{Q}_p \mathbf{\omega}_\ell \leq \eta_p,
|{x}_{r},\ell,q| \leq A_{\mathrm{papr}}\ ,\|\mathbf{\omega}_\ell\|_2^2 \!\leq\! P_T/L\}$ comprises the CI-QoS, angular-spectral coexistence, PAPR  
for symbol $\ell$  in~\eqref{eq:feasible} and power inequality constraints. Note that $\mathcal{F}_\ell$ is the convex feasible set. 
The remaining constraint in  \eqref{eq:feasible} is the total power 
$\sum_{\ell=0}^{L-1}\|\mathbf{\omega}_\ell\|_2^2 = P_T$ constraint, which is non-convex.  
To handle this, we introduce a consensus variable $\mathbf{z}_\ell \in \mathbb{C}^{N_TN \times 1}$ to enforce the power constraint to $\mathbf{z}_\ell$ and the $t$th iteration of \eqref{eq:mm_qp} is solved by incorporating ADMM inner iterations (lines \ref{admm_m}-\ref{admm_end}).  
\begin{figure*}[t]
  \centering
  \includegraphics[width=\textwidth, height=0.17 \textheight, keepaspectratio]{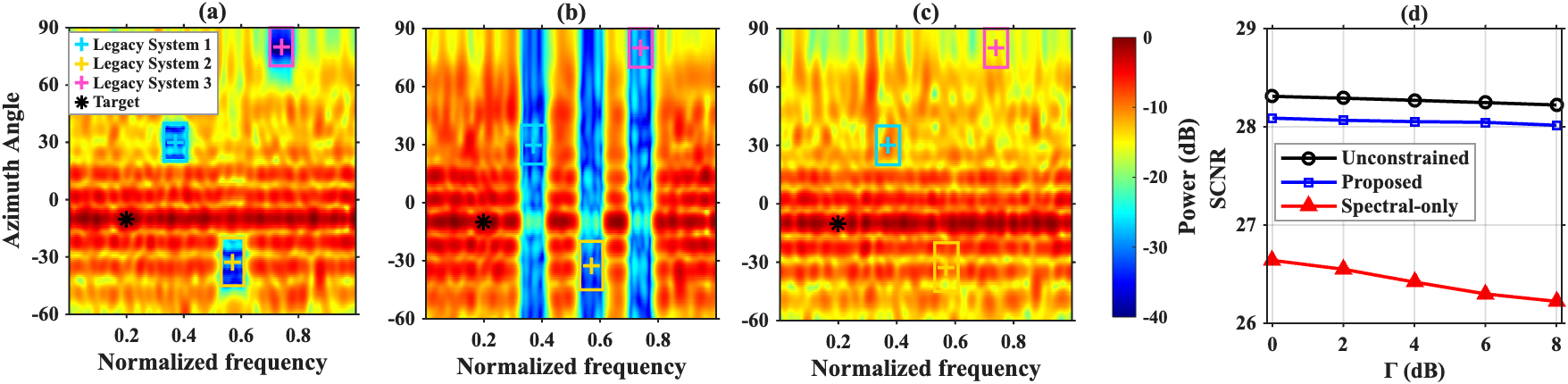}
  \caption{2D angular-spectral power maps for (a)~proposed
    (b)~spectral-only, (c)~unconstrained, and 
    (d)~Radar SCNR vs.\ communication QoS}
  \label{fig:combined}
\end{figure*}
The per-symbol ADMM inner iteration index is denoted by $m$ (line~\ref{admm_m}), and  variables updated during this iteration carry the superscript $(\cdot)^{<m>}$. 
Denoting augmented Lagrangian penalty parameter as $\rho > 0$ and scaled dual variable at the $m$th iteration as  $\boldsymbol{\lambda}^{<m>}_\ell \in \mathbb{C}^{N_TN \times 1}$, the per-symbol ADMM recursion (Algorithm~\ref{alg:mm_admm}, lines~\ref{admm_xupdate}-\ref{admm_dualupdate}) are formulated as
\vspace{-0.3cm}

\begin{subequations}
\small
\label{eq:admm_updates}
\begin{align}
\mathbf{\omega}_\ell^{<m+1>}
&=
\mathop{\mathrm{argmin}}_{\mathbf{\omega}_\ell\in\mathcal{F}_\ell}
\;
(\mathbf{\omega}_\ell)^H
\left(
{\mathbf{U}}_\ell^{(t)}
+
\frac{\rho}{2}\mathbf{I}
\right)
\mathbf{\omega}_\ell
\notag\\
&
~~ - \!
 2\Re 
\left\{ \!
\left(
{\mathbf{q}}_\ell^{(t)}
+
\frac{\rho}{2}
(\mathbf{z}^{<m>}[\ell]-\boldsymbol{\lambda}_\ell^{<m>})
\right)^H \!
\mathbf{\omega}_\ell \!
\right\},
\label{eq:x_update}
\\
\mathbf{z}_\ell^{<m+1>}
&=
\sqrt{\frac{P_T}{L}}
\frac{
\mathbf{\omega}_\ell^{<m+1>}
+
\boldsymbol{\lambda}_\ell^{<m>}
}{
\left\|
\mathbf{\omega}_\ell^{<m+1>}
+
\boldsymbol{\lambda}_\ell^{<m>}
\right\|_2
},
\label{eq:y_update}
\\
\boldsymbol{\lambda}_\ell^{<m+1>}
&=
\boldsymbol{\lambda}_\ell^{<m>}
+
\mathbf{\omega}_\ell^{<m+1>}
-
\mathbf{z}_\ell^{<m+1>}.
\label{eq:dual_update}
\end{align}
\end{subequations}

Because the $\mathbf{\omega}$-update in \eqref{eq:x_update} is a convex
quadratic program over $\mathcal{F}_\ell$, it is solved by CVX~\cite{boydbook} at each symbol $\ell$. 
The $\mathbf{z}$-update in \eqref{eq:y_update}
is the Euclidean projection of
$\mathbf{\omega}_\ell^{<m+1>}+\boldsymbol{\lambda}_\ell^{<m>}$
onto the per-symbol power sphere
$\|\mathbf{z}_\ell\|_2^2 = P_T/L$, enforcing the total
power constraint $\sum_{\ell=0}^{L-1}\|\mathbf{\omega}_\ell\|_2^2
= P_T$ across symbols. The dual update~\eqref{eq:dual_update} drives the consensus between $\mathbf{\omega}_\ell^{<m+1>}$ and
$\mathbf{z}_\ell^{<m+1>}$. The full procedure
is summarized in Algorithm~\ref{alg:mm_admm}.
\section{Simulation Results}

Numerical simulations are conducted to validate the proposed MM--ADMM
framework in Algorithm~\ref{alg:mm_admm} by evaluating radar SCNR,
localized angular--spectral coexistence, sensing--communication tradeoff
performance, and robustness under antenna sweep.
The BS is equipped with an equal number of transmit and receive antennas,
$N_T=N_R=12$, with half-wavelength antenna spacing, i.e., $\lambda/2$. A CPI comprises $N=32$ subcarriers with subcarrier spacing
$\delta_f=120$~kHz across $L=16$ OFDM symbols. The carrier frequency is
set to $f_c=24$~GHz and the total transmit power is $P_T=15$~W. The radar
receiver noise variance is normalized to $\sigma_r^2=1$ without loss of
generality. The BS simultaneously serves $K=3$ single-antenna downlink
communication users with communication noise power
$\sigma_c^2=-70$~dBm. The PAPR tolerance is set to $\varepsilon=1$. For
the radar sensing environment, a point target is placed at azimuth
$\theta_0=-10^\circ$, range $R_0=41.8$~m, and radial velocity
$v_0=-31.2$~m/s. The clutter environment is modeled using $N_c=60$
uniformly distributed isotropic clutter patches spanning the full azimuth
range $[-90^\circ,90^\circ]$, with Doppler velocities uniformly spanning
$[-10,10]$~m/s and delays covering the range $[0.5R_0,1.5R_0]$. A unit
clutter-to-noise ratio per patch is assumed. We use QPSK modulation
($\Omega=4$), i.e., $\Phi=\pi/4$.

Three legacy systems occupy the protected angle--frequency regions. The
frequency coordinates are expressed as normalized frequencies
$\bar{f}=f_{\mathrm{offset}}/(N\delta_f)\in[0,1]$, where
$f_{\mathrm{offset}}$ is the frequency offset from the first subcarrier
and $N\delta_f=3.84$~MHz is the total OFDM bandwidth. Legacy System~1
occupies $\theta\in[20^\circ,40^\circ]$ and
$\bar{f}\in[0.33,0.41]$; Legacy System~2 occupies
$\theta\in[-45^\circ,-20^\circ]$ and
$\bar{f}\in[0.53,0.61]$; and Legacy System~3 occupies
$\theta\in[70^\circ,90^\circ]$ and
$\bar{f}\in[0.70,0.78]$. The interference threshold $\eta_p$ is set to
enforce $-20$~dB suppression in each protected angular--spectral region
relative to the corresponding unconstrained waveform. Three ISAC methods are compared under identical initializations and
simulation parameters: (i)~\emph{Proposed}, the MM--ADMM
Algorithm~\ref{alg:mm_admm}; (ii)~\emph{Spectral-only}, which adapts the
spectral coexistence matrix $\mathbf{R}_I$ from~\cite{ADMMWave},
suppressing energy in each protected band across all angular directions;
and (iii)~\emph{Unconstrained}, an OFDM adaptation of the MM--ADMM
framework in~\cite{ISAC_2} with no protection imposed on the legacy
systems. We also include a \emph{Radar-only precision coexistence}
benchmark, in which the communication constraints are removed, resulting
in a MIMO radar waveform design subject to the angular--spectral
coexistence constraints.

Figs.~\ref{fig:combined}(a)--(c) present the two-dimensional (2D) angular-spectral
power maps by the three methods under a target communication QoS threshold of $\Gamma = 5$~dB (where $\gamma_k = \sqrt{\sigma_c^2\,\Gamma}\sin(\pi/\Omega)$).
In Fig.~\ref{fig:combined}(a), the proposed method synthesizes deep power attenuation notches that are precisely localized within the protected angle–frequency regions of the three legacy systems. 
The spectral-only benchmark in Fig.~\ref{fig:combined}(b) suppresses three
entire frequency bands across the full $180^\circ$ azimuth extent, sacrificing spatial DoFs in angular directions that require no protection. 
The unconstrained benchmark in Fig.~\ref{fig:combined}(c) distributes power freely with no protection to legacy regions.  

Before quantifying the SCNR--QoS tradeoff, we evaluate the 
feasibility of the three methods. For each method and over 
$M = 50$ independent channel realizations, we solve the 
max-threshold feasibility problem
$\Gamma^\star = \max~\Gamma~\text{subject to}~\mathbf{x} \in 
\mathcal{X}(\Gamma)$,
where $\mathcal{X}(\Gamma)$ denotes the constraint set comprising 
the CI, PAPR, transmit power, and angular-spectral leakage 
constraints at $-20$~dB suppression.
It is consistently observed that $\Gamma^\star$ exceeds $8$~dB 
for the proposed joint method across all $50$ channel trials, 
confirming that the QoS sweep range $\Gamma \in [0, 8]$~dB lies 
well within the feasible operating region. In contrast, the spectral-only baseline exhibits $\Gamma^\star<8$~dB
for some channel realizations. For these realizations, we increase
the transmit power beyond $15$~W.
\begin{figure}[t]
  \centering
  \includegraphics[width=0.55\columnwidth]{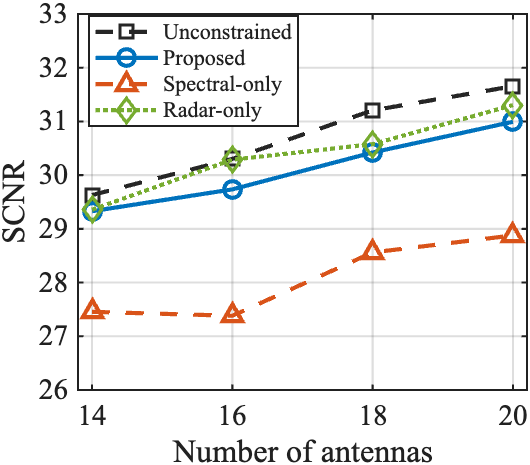}
  \caption{Radar SCNR versus the number of antennas $N_T=N_R$
  }
  \label{fig:antenna_sweep}
\end{figure}
Fig.~\ref{fig:combined}(d) presents the SCNR--QoS tradeoff for
$\Gamma\in\{0,2,4,6,8\}$~dB. All three methods exhibit decreasing
SCNR as $\Gamma$ increases, reflecting the fundamental ISAC tradeoff
between sensing and communication performance. The proposed method
achieves SCNRs of approximately $28.09$~dB and $28.02$~dB at $\Gamma=0$ and
$8$~dB, respectively, closely tracking the unconstrained benchmark,
which achieves approximately $28.31$~dB and $28.22$~dB at the same
operating points. The spectral-only baseline degrades more significantly
at high QoS, reaching only approximately $26.22$~dB at
$\Gamma=8$~dB, corresponding to about a $1.79$~dB loss relative to the
proposed method.  This performance gap occurs because high QoS requirements reduce the remaining optimization space for SCNR, making the spatial DoF 
preservation of our proposed joint constraints increasingly vital 
for maintaining robust radar sensing capability.

Fig.~\ref{fig:antenna_sweep} evaluates the impact of the number
of transmit and receive antennas by varying
$N_T=N_R\in\{14,16,18,20\}$. The proposed method improves from approximately
$29.33$~dB at $N_T=N_R=14$ to $30.99$~dB at
$N_T=N_R=20$, while the unconstrained benchmark increases
from approximately $29.63$~dB to $31.66$~dB. The spectral-only
method increases from approximately $27.46$~dB to $28.88$~dB,
remaining below the proposed method across all antenna
configurations. The radar-only benchmark achieves approximately
$29.36$~dB and $31.30$~dB at the same endpoint configurations,
indicating a modest sensing performance loss from imposing
the communication constraints in the proposed method.

\section{Conclusion and Future Works}
This paper proposed a dual-function waveform design framework for
MIMO-ISAC systems operating under angular-spectral coexistence
constraints with legacy systems. The resulting optimization problem was solved by developing a computationally efficient MM–ADMM algorithm. The proposed framework exploited spatial DoFs to restrict beamformed leakage strictly within the localized angle–frequency regions occupied by legacy systems. Simulation results demonstrated effective coexistence with negligible sensing and communication performance loss, and with increasing gains over spectral-only baselines under stringent communication QoS requirements. Several directions remain open for future work. Our immediate goal is to characterize how the design scales to larger space--frequency
dimensions, where the per-symbol convex subproblem becomes a major bottleneck. The study presented here also assumed perfect channel knowledge, and we plan to examine how the achievable suppression degrades under channel estimation error. Lastly, we intend to extend the formulation to cell-free massive MIMO, where legacy users can be
better protected but coexistence constraints must be enforced jointly across distributed access points.
\bibliographystyle{IEEEtran}
\bibliography{biblib}
\end{document}